\title{Testing for Synchronization}
\author{Mikhail V. Berlinkov\thanks{Supported by the Russian Foundation for Basic Research,
grant 13-01-00852, and by the Presidential Program for young
researchers, grant MK-266.2012.1.}}
\authorrunning{M. V. Berlinkov}
\institute{ Institute of Mathematics and Computer Science\\
              Ural Federal University, 620000 Ekaterinburg, Russia\\
              \email{berlm@mail.ru}}
\DeclareSymbolFont{rsfscript}{OMS}{rsfs}{m}{n}
\DeclareSymbolFontAlphabet{\mathrsfs}{rsfscript}
\newcommand{\sw}{reset word}
\newtheorem{crit}{Criterion}
\begin{document}
\maketitle

\begin{abstract}
We consider the first problem that appears in any application of
synchronizing automata, namely, the problem of deciding whether or
not a given $n$-state $k$-letter automaton is synchronizing. First
we generalize results from~\cite{RandSynch},\cite{On2Problems} for
the case of strongly connected partial automata. Specifically, for
$k>1$ we show that such an automaton is synchronizing with
probability $1-O(\frac{1}{n^{0.5k}})$ and present an algorithm with
linear in $n$ expected time, while the best known algorithm is
quadratic on each instance. This results are interesting due to
their applications in synchronization of finite state information
sources.

After that we consider the synchronization of reachable partial
automata that has application for splicing systems in computational
biology. For this case we prove that the problem of testing a given
automaton for synchronization is NP-complete.
\end{abstract}

\section{Preliminaries}

A \emph{deterministic finite automata} (DFA) $\mathrsfs{A}$ is a
triple $\langle Q,\Sigma,\delta \rangle$ where $Q$ is the state set,
$\Sigma$ is the input alphabet and $\delta: Q \times \Sigma
\rightarrow Q$ is the \emph{transition function}. If $\delta$ is
completely defined on $Q \times \Sigma$ then $\mathrsfs{A}$ is
called \emph{complete}, otherwise $\mathrsfs{A}$ is called
\emph{partial}. The function $\delta$ extends uniquely to a function
$Q\times\Sigma^*\to Q$, where $\Sigma^*$ stands for the free monoid
over $\Sigma$; the latter function is still denoted by $\delta$.
When we have specified a DFA $\mathrsfs{A}=\langle
Q,\Sigma,\delta\rangle$, we can simplify the notation by writing
$S.w$ instead of $\{ \delta(q,w) \mid q \in S \}$ for a subset
$S\subseteq Q$ and a word $w \in \Sigma^*$. In what follows, we
assume $|\Sigma|>1$ because the singleton alphabet case is trivial
for considered problems.

A DFA $\mathrsfs{A} = \langle Q,\Sigma,\delta \rangle$ is called
\emph{synchronizing} if there exists a word $w\in\Sigma^*$ such that
$|Q.w|=1$. Notice that here $w$ is not assumed to be defined at all
states. Each word $w$ with this property is said to be a
\emph{reset} or \emph{synchronizing} word for $\mathrsfs{A}$.

The synchronization of strongly connected partial automata as models
of \emph{$\epsilon$-machines} is one of the central object for
research in the theory of stationary information sources. The
synchronization and state prediction for stationary information
sources has many applications in information theory and dynamical
systems. An $\epsilon$-machine can be defined as a strongly
connected DFA with probability distribution defined on outgoing
arrows for each state (see~\cite{Emach},\cite{Amach} for details).
An $\epsilon$-machine is \emph{exactly synchronizable} or simply
\emph{exact} if the corresponding partial strongly connected
automaton is synchronizing in our terms.

A word $v$ \emph{merges} a pair $\{p,q\}$ if $p.v = q.v$ or $v$ is
defined on exactly one of the states from $\{p,q\}$. The following
analogue of synchronization criterion from~\cite{Ce64} for this case
also has been presented in~\cite{Emach}.
\begin{crit}[Travers and Crutchfield~\cite{Emach}]
\label{e_crit}A strongly connected partial automaton is
synchronizing if and only if for each pair of states $p,q \in Q$
there is a word $v$ which \emph{merges} the pair $\{p,q\}$.
\end{crit}
Given a partial strongly connected DFA $\mathrsfs{A} = \langle
Q,\Sigma,\delta \rangle$, this criterion can be verified by running
\emph{Breadth First Search} (\emph{BFS}) from the set $$\{ \{q,q\}
\mid q \in Q\} \cup \{\{0,q\} \mid q \in Q\}$$ by reverse arrows in
the \emph{square automaton} $\mathrsfs{A}^2 =  \langle Q^{2},
\Sigma, \delta^2 \rangle$, where $Q^{2} = \{ \{p,q\} \mid p, q \in Q
\cup \{0\} \}$, and $\delta^2$ is the natural extension of $\delta$
to $Q^2$ where all undefined transitions are replaced with
transitions to $0$, i.e. for each $x \in \Sigma$
\begin{equation}\delta^2(\{p,q\},x) = \begin{cases}
\delta(\{p,q\},x),\ & p \neq 0, q \neq 0 \\
\{\delta(p,x),0\},\ & q = 0
\end{cases}
\end{equation}
Let us call this algorithm \emph{$IsSynch$}. Since $\mathrsfs{A}^2$
has $|\Sigma|(n+1)^2$ arrows, this algorithm is quadratic in time
and space. Notice that this algorithm is quadratic in $n$ for each
automaton whence it is expected time for \emph{random} automata is
also quadratic. In Section~\ref{sec_partial} we generalize results
presented in~\cite{RandSynch},\cite{On2Problems}. First we show that
a \emph{random} strongly connected partial automaton $\mathrsfs{A} =
\langle Q,\Sigma,\delta \rangle$ is synchronizable with probability
$1-O(\frac{1}{n^{0.5|\Sigma|}})$ and the bound is tight for the
binary alphabet case. As well as in~\cite{On2Problems}, this result
yields an algorithm having linear in $n$ expected time.

If each pair of states in a complete automaton $\mathrsfs{A}$ can be
merged then $\mathrsfs{A}$ is synchronizing. Hence $IsSynch$ can be
used to test a complete automaton for synchronization because the
strong connectivity condition is not used by the algorithm. However,
$IsSynch$ can not be used to test a given partial DFA for
synchronization because the strong-connectivity condition is
essential in Criterion~\ref{e_crit}. Let us consider the case of
\emph{reachable} partial automata in details. Recall that an
automaton is called reachable if one can choose an \emph{initial}
state $q_0$ and a \emph{final} set of states $F$ such that each
state $q \in Q$ is accessible from $q_0$ and co-accessible from $F$,
i.e. there are words $u,v \in \Sigma^*$ such that $q_0.u = q$ and
$q.v \in F$.

This case is of certain interest due to its applications in
dna-computing, namely, a reset word serves as a \emph{constant word}
for the corresponding \emph{splicing} systems (see
e.g.~\cite{SplSyst}). Unfortunately, it is hardly believable to get
an algorithm with expected linear time for this problem, because the
problem is NP-complete. We show this result in
Section~\ref{sec_general}.

It is worth to mention that there are other types of synchronization
of partial automata, for instance \emph{careful synchronization}.
Basically, testing for synchronization becomes much more
computationally hard for these types of synchronization
(see~\cite{Mart10},\cite{Mart12} for details).

\section{Strongly Connected Partial Automata}
\label{sec_partial}

In this section we aim to adapt results
from~\cite{RandSynch},\cite{On2Problems} to the case of partial
strongly connected automata. First we should consider what we mean
by a random partial automaton. In this paper we assume that a given
transition from a state $q \in Q$ by letter $a \in \Sigma$ is
undefined equiprobable with any other possible image, that is, with
probability $\frac{1}{|Q|+1}$.

Formally, let $Q$ stand for $\{1,2, \dots n\}$, $X = \{0\} \cup Q$,
and $k>1$ for the alphabet size. Denote by ${\Sigma'}_n$ the
probability space of all maps from $X$ to $X$ which preserves $0$.
Denote by ${\Omega'}^{k}_n$ the probability space of all $k$-letter
$n$-state automata where all letters $c \in \Sigma$ are chosen
uniformly at random and independently from ${\Sigma'}_n$.

First we prove a supplementary result for the general case of
partial automaton and further get the main results as consequences.

%The following theorem is the analogue of~\cite[Theorem~1]{RandSynch}.
\begin{theorem}
\label{th_main}Given a random partial automaton $\mathrsfs{A} =
\langle Q,\{a,b\},\delta \rangle \in {\Omega'}^{2}_n$, the
probability that each pair of states $\{p,q\}$ can be merged equals
$1-\Theta(\frac{1}{n})$.
\end{theorem}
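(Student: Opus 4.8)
The plan is to translate ``each pair can be merged'' into a reachability statement in the square automaton $\mathrsfs{A}^2$ and to analyze when that reachability fails. By Criterion~\ref{e_crit} together with the description of the $IsSynch$ procedure, a pair $\{p,q\}$ with $p,q\in Q$ can be merged if and only if its forward orbit in $\mathrsfs{A}^2$ eventually meets the target set $M=\{\{q,q\}\mid q\in Q\}\cup\{\{0,q\}\mid q\in Q\}$. Equivalently, $\{p,q\}$ is \emph{bad} (non-mergeable) exactly when every word either keeps both coordinates alive and distinct or sends them simultaneously to $0$; that is, the orbit stays inside the full off-diagonal pairs together with $\{0,0\}$ and never hits $M$. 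Thus the bad pairs are precisely those trapped in a forward-closed set that avoids $M$, and I must estimate the probability that such a trap exists. I would prove the matching bounds separately.

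For the lower bound $\Omega(\tfrac1n)$ the dominant obstruction is a state fixed by both letters. A fixed state $s.a=s.b=s$ occurs with probability $\tfrac{1}{(n+1)^2}$, so the expected number of such states is $\Theta(\tfrac1n)$ and, since the events are nearly independent across states (the second moment is $O(\tfrac1{n^2})$), one gets $\mathrm{Pr}[\exists\,s:\ s.a=s.b=s]=\Theta(\tfrac1n)$. Conditioning on such an $s$, the pair $\{p,s\}$ is bad whenever $p$ can reach neither $0$ nor $s$: indeed $s$ never dies and never moves, so merging $\{p,s\}$ forces $p.v\in\{0,s\}$ for some word $v$. Since $s$ cannot reach $0$, it lies in the closed sub-automaton $S_0$ of states that never die, which has linear size with constant probability; inside $S_0$ the dynamics is complete and the states co-accessible from $s$ form a proper (giant) subset, leaving a linear-sized set of $p\in S_0$ that reach neither $0$ nor $s$. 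Each such $p$ yields a bad pair $\{p,s\}$, and multiplying the constant-order survival of $S_0$ and of its non-co-accessible part against the $\Theta(\tfrac1n)$ existence of $s$ gives the lower bound.

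For the upper bound $O(\tfrac1n)$ I would show that, apart from the fixed-point mechanism above, traps are rare. The first step is a first-moment estimate: a single letter merges a full pair with probability $\Theta(\tfrac1n)$ (collision, or exactly one coordinate dying) while keeping it full with probability $1-\Theta(\tfrac1n)$, so every forward-closed off-diagonal configuration of bounded support other than the fixed-point-induced ones has expected count $O(\tfrac1{n^2})$, and summing over bounded supports keeps the total $O(\tfrac1n)$. The second, harder step is to exclude ``large'' traps by a flooding argument in the spirit of~\cite{RandSynch},\cite{On2Problems}: running the reverse BFS from $M$ and revealing transitions by the principle of deferred decisions, I would show that the set of already-merged pairs, once it grows past a threshold, expands to absorb every remaining full pair except those blocked by a small configuration already accounted for. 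Combining the two steps bounds $\mathrm{Pr}[\exists\,\text{bad pair}]$ by $O(\tfrac1n)$.

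The step I expect to be the main obstacle is this flooding estimate: controlling the dependencies introduced as the reverse BFS uncovers transitions, and in particular ruling out ``medium-sized'' traps---forward-closed sets of pairs whose support has size between a constant and $\Theta(n)$---where the first-moment bounds are delicate and must be summed against the entropy of the number of admissible supports. Matching this $O(\tfrac1n)$ upper bound with the fixed-point lower bound yields the claimed probability $1-\Theta(\tfrac1n)$.
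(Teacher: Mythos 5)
Your lower bound is essentially the mechanism the paper uses, just unpacked: the paper conditions on $\mathrsfs{A}$ being complete (which happens with constant probability) and then invokes the tightness of \cite[Theorem~1]{RandSynch} for complete automata, where the dominant obstruction is exactly a state fixed by both letters together with a state that cannot reach it. Your version additionally has to rule out escape to $0$, which you do via the complete subautomaton $S_0$; the claims you assert there (that $S_0$ is large and that a positive fraction of its states cannot reach $s$, with constant conditional probability) are true and provable, though you state rather than prove them. So the lower bound is acceptable in outline.

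The upper bound, however, has a genuine gap, and it sits precisely where you say you expect ``the main obstacle.'' Your first-moment count over forward-closed supports is only useful for supports of bounded size: the probability that both letters map a set $T$ of size $m$ into $T\cup\{0\}$ behaves like $(em/n)^m$ after accounting for the $\binom{n}{m}$ choices, which is $O(1/n^2)$ for $m=2,3,\dots$ but becomes trivial (indeed explodes) as $m$ approaches $n$, since $Q$ itself is always closed. Hence everything reduces to excluding medium and large traps, and the ``flooding argument'' you invoke for that is not carried out --- it is the entire technical content of the paper's proof. Concretely, the paper replaces your deferred step by: the cluster/$0$-cluster analysis of the underlying digraphs (Lemma~\ref{lem_0cluster}, Corollary~\ref{cor_htree}), the lower bound $n/4e^2$ on any subautomaton (Lemma~\ref{lem_reach_from_F0}), the production of $n^{0.6}$ \emph{stable} pairs via the $F$-clique machinery from the Road Coloring Problem (Lemma~\ref{lem_tr1}, Theorems~\ref{th_1stable} and~\ref{th_many_stable_ext}), the resulting single synchronizing class covering all clusters of size $>n^{0.45}$ (Corollary~\ref{cor_big_syn_clusters}), the $\ln n$ bound on undefined transitions (Lemma~\ref{lem_max_und_trans}), and the bootstrap from ``a random pair is deadlock with probability $O(n^{-0.51})$'' to $O(n^{-1.02})$ (Lemma~\ref{lem_rand_pair}, Corollary~\ref{cor_rand_pair}), all assembled in the two-case analysis of Lemma~\ref{lem_deadlock} according to whether the $0$-cluster or a complete subautomaton is reachable from a putative deadlock pair. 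A reverse-BFS/deferred-decisions flooding argument on $\mathrsfs{A}^2$ might conceivably be made to work, but as written you have named the hard step rather than proved it, so the proposal does not establish the $O(1/n)$ upper bound.
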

\begin{proof}
Define a complete automaton $\mathrsfs{A}_c = \langle \{0\} \cup Q
,\{a,b\},\delta' \rangle$ where all undefined transition are
replaced with the transition to $0$ state.
%Notice that
%$\mathrsfs{A}$ is synchronizing whenever $\mathrsfs{A}_c$ can be
%merged to a two element set (which always include $0$).
Fix a letter $x \in \Sigma$ and remove all edges of $\mathrsfs{A}_c$
except those labeled $x$. The remaining graph is called the
\emph{underlying digraph of $x$} and is denoted $UG(x)$. Every
connected component of the underlying digraph of $x$ consists of a
unique cycle (that can degenerate to a loop) and possibly some trees
rooted on the cycle, see Fig.~\ref{fig:cluster}.
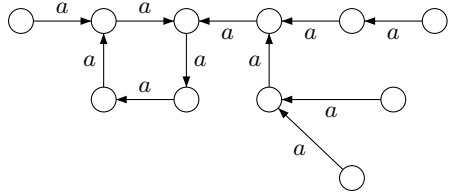
\begin{figure}[ht]
\begin{center}
\unitlength=.55mm
\begin{picture}(95,30)(0,-6)\nullfont
 \gasset{Nw=6,Nh=6,Nmr=3}
\node(B)(0,20){} \node(Y)(100,20){} \node(AD)(20,1){}
\node(AB)(20,20){} \node(BC)(40,20){} \node(CD)(40,1){}
\node(AC)(60,20){} \node(BD)(80,20){} \node(AE)(60,1){}
\node(BE)(90,1){}  \node(X)(80,-18){}
\drawedge[ELside=r](CD,AD){$a$} \drawedge(AD,AB){$a$}
\drawedge(AB,BC){$a$} \drawedge(BC,CD){$a$} \drawedge(B,AB){$a$}
\drawedge(AC,BC){$a$} \drawedge(BD,AC){$a$} \drawedge(AE,AC){$a$}
\drawedge(BE,AE){$a$} \drawedge(Y,BD){$a$} \drawedge(X,AE){$a$}
\end{picture}
\end{center}
\caption{A typical cluster}
 \label{fig:cluster}
\end{figure}
Each connected component of underlying digraphs is called
\emph{cluster}. Denote by $z_x$ the size of the cluster with $0$
($0$-cluster) and by $UG_0(x)$ the underlying digraph of $x$ without
$0$-cluster.

In what follows by \emph{wlp} we mean `with probability
$O(\frac{1}{n})$' and by \emph{whp} we mean `with probability
$1-O(\frac{1}{n})$'. The following lemma is one of the crucial ones
to adapt results to the case of partial automata.

\begin{lemma}[Appendix]
\label{lem_0cluster}Given a letter $x$ and an integer $0 \leq k \leq
n$, the probability that $z_x = k$ is at most $O( \frac{1}{k+1}(
\frac{1}{\sqrt{k+1}} + \frac{1}{\sqrt{n-k+1}}))$.
\end{lemma}

The following theorem for underlying digraphs is crucial
in~\cite{RandSynch}.
\begin{theorem}[Theorem~4 from~\cite{RandSynch}]
\label{th_high_tree}Let $g$ be the digraph of a random complete map
from $\Sigma_n$. Let $T$ be the highest tree of $g$, and denote its
height by $\tau(T)$. Then with probability $1-O(1/\sqrt{n})$ all
other trees of $g$ are lower than $\tau(T)-c$ for some constant
$c>0$ and there are at least $32\ln{n}$ vertices of levels greater
than $\tau(T)-c$ in $T$.
\end{theorem}

As an easy consequence of this theorem we get the following
corollary for the underlying digraphs of $\mathrsfs{A}$.
\begin{corollary}[Appendix]
\label{cor_htree} Let $T$ be the highest tree in $UG_0(x)$. Then
with probability $1-O(1/\sqrt{n})$ all other trees in $UG_0(x)$ are
lower than $T$ by some constant $c>0$ and there are at least
$32\ln{n}$ vertices of levels greater than $\tau(T)-c$ in $T$.
\end{corollary}

An automaton $\mathrsfs{B} = \langle Q', \Sigma, \delta' \rangle$ is
a \emph{subautomaton} of $\mathrsfs{A}$ if $Q' \subseteq Q$ and
$\delta(q,x) = \delta'(q,x)$ for each $q \in Q', x \in \Sigma$.
 The following lemma is an analogue of \cite[Lemma~1]{RandSynch}.
\begin{lemma}[Appendix]
\label{lem_reach_from_F0} The number of states in any subautomaton
of $\mathrsfs{A}$ is at least $n/4e^2$ whp.
\end{lemma}

As a straightforward consequence of Lemma~\ref{lem_reach_from_F0}
and Corollary~\ref{cor_htree} we get
\begin{corollary}
\label{cor_high_tree}Whp the underlying digraph of one letter (say
$a$) of $\mathrsfs{A}_c$ has the unique highest tree (in non
$0$-cluster) of some height $h$. Let $H$ be the set of vertices with
levels at least $h$. Then $H$ is random for letter $b$ and $H$
contains at least $32\ln{n}$ vertices.
\end{corollary}

The proof of the following lemma is almost identical to
\cite[Lemma~2]{RandSynch}.
\begin{lemma}
\label{lem_H_is_reachable}The subset $H$ from
Corollary~\ref{cor_high_tree} of top-level vertices of the
underlying digraph of $a$ intersects with any subautomaton whp.
\end{lemma}

Now let us introduce the definitions of \emph{stable} and
\emph{deadlock} pairs for partial automaton. Call a pair $\{p,q\}$
\emph{stable} if for each word $u$ such that $\{p,q\}.u$ is
non-empty there is a word $v$ such that $|\{p,q\}.uv|=1$. In
opposite, $\{p,q\}$ is called \emph{deadlock} if it can not be
merged, i.e. there is no word $u$ such that
$|\{p,q\}.u|=1$. %It is shown in Criterion~\ref{e_crit} that if a
%strongly connected partial automaton is not synchronizing there is a
%deadlock pair.
A subset $A \subseteq Q$ is called $F$-clique of $\mathrsfs{A}$ if
it is a maximal by size set such that each pair of states from $A$
is deadlock. By definition all $F$-cliques have the same size.

The two following statements plays an important role in the solution
of the famous \emph{Road Coloring Problem} (see~\cite{TRRCP08}).
\begin{lemma}[Lemma~8 from~\cite{RandSynch}]
\label{lem_tr1}
    Let $A$ and $B$ be two distinct $F$-cliques
such that $A \setminus B = \{p\}, B \setminus A = \{q\}$ for some
pair of states $\{p,q\}$; Then $\{p,q\}$ is a stable pair.
\end{lemma}

\begin{theorem}[Theorem~2 from~\cite{RandSynch}]
\label{th_1stable} Suppose that the underlying digraph of $a$ has
the highest tree $T$ of height at least $t$ and all other trees are
strictly lower than $t$. Suppose also that some state $p$ of level
$t$ is reachable from $F$-clique $F_0$. Denote by $q$ the
predecessor of the root of tree $T$ on the $a$-cycle. Then
$\{p.a^{t-1},q\}$ is stable and random for $b$.
\end{theorem}
One can easily verify that the proofs of Lemma~\ref{lem_tr1} and
Theorem~\ref{th_1stable} given in~\cite{RandSynch} hold true for
partial automata also.

Given a pair $\{p,q\}$ random for a letter $x$, the probability that
$p$ or $q$ go to $0$ by $x$ equals $\frac{2}{n+1}$ while the
probability of merging by $x$ equals $\frac{1}{n+1}$. Using this
fact, one can easily verify that the following theorem
from~\cite{RandSynch} holds true for partial automata also.

\begin{theorem}[Theorem~7 from~\cite{RandSynch}]
\label{th_many_stable_ext}Whp a random $n$-state automaton
$\mathrsfs{A}=\langle Q,\{a,b\},\delta \rangle $ has $n^{0.6}$
stable pairs random for $a$ and $n^{0.6}$ stable pairs random for
$b$ and at most $O(n^{0.7})$ transitions has to be observed.
\end{theorem}
Denote by $S$ the corresponding set of stable pairs from
Theorem~\ref{th_many_stable_ext} random to letter $a$. The following
lemma gives a lower bound on the number of such pairs in $UG_0(a)$.
\begin{lemma}[Appendix]
\label{lem_stable_in_ac}Let $k = z_a$. If $n-k \geq n^{0.701}$ then
there are at least $(n-k)^{0.001}$ pairs from $S$ in $UG_0(a)$ whp.
\end{lemma}

Call a set of states a \emph{synchronizing class} if each pair from
this set can be merged. Due to Lemma~\ref{lem_stable_in_ac} we can
adapt \cite[Corollary~2]{RandSynch} to the underlying digraphs for
both letters.
\begin{corollary}
\label{cor_big_syn_clusters}If $n-z_x \geq n^{0.701}$ for $x \in
\{a,b\}$. Then whp there is at most $5\ln{n}$ clusters of $UG_0(x)$
in a one synchronizing class of common size at least $(n-z_x) -
(n-z_x)^{0.45}$.
\end{corollary}

Let $x \in \{a,b\}$ and $n-z_x \geq n^{0.701}$.
Corollary~\ref{cor_big_syn_clusters} implies that all clusters
greater than $n^{0.45}$ of $UG_0(x)$ lie in a one synchronizing
class. Denote the set of states in these clusters for $x$ by $S_x$.
Notice that this set is random for the second letter because it is
completely defined by $x$. Denote by $T_x$ the complement for $S_x$
in $UG_0(x)$. Equivalently, $T_x$ can be defined as the state set of
the clusters of $UG_0(x)$ of size at most $n^{0.45}$. Since whp
there are at most $5\ln{n}$ clusters, we also get that $|T_x| \leq
5\ln{(n)}n^{0.45} \leq n^{0.46}$ whp for $n$ big enough.

The following lemma gives an upper bound on the number of states
with undefined transitions for each letter.
\begin{lemma}[Appendix]
\label{lem_max_und_trans} Whp there are at most $\ln{n}$ states with
undefined transition by $x$ for each $x \in \{a,b\}$.
\end{lemma}

\begin{lemma}[Appendix]
\label{lem_rand_pair} A random pair $\{p,q\}$ for a letter $x \in
\{a,b\}$ is deadlock with probability at most
$O(\frac{1}{n^{0.51}})$.
\end{lemma}

The following corollary easily follows from
Lemma~\ref{lem_rand_pair}.
\begin{corollary}
\label{cor_rand_pair} A random pair $\{p,q\}$ for a letter $x \in
\{a,b\}$ is deadlock with probability $O(\frac{1}{n^{1.02}})$.
\end{corollary}
\begin{proof}
    Without loss of generality, suppose $x=a$. Since $\{p,q\}$ is
random for $a$, the sets $\{p.a,q.a\}, \{p.a^2,q.a^2\}$ are non
empty with probability at least $1-\frac{2}{(n+1)^2}$ and random for
$b$. If $|\{p.a,q.a\}|=1$ or $|\{p.a^2,q.a^2\}|=1$ the pair
$\{p,q\}$ is not deadlock. Otherwise by Remark~\ref{lem_rand_pair}
one of these pairs is not deadlock with probability
$1-O(\frac{1}{n^{1.02}})$ whence $\{p,q\}$ also.
\end{proof}

The following lemma completes the proof of the lower bound.
\begin{lemma}
\label{lem_deadlock} $\mathrsfs{A}$ does not have deadlock pairs
whp.
\end{lemma}
\begin{proof}
Suppose there is a deadlock pair $\{p,q\}$. Consider first the case
when $0$-cluster of some letter (say $a$) is reachable from
$\{p,q\}$. Then there is a deadlock pair $\{p',q'\}$ such that both
$p'.a$ and $q'.a$ are undefined. By Lemma~\ref{lem_max_und_trans}
there are at most $\ln^2{n}$ of such pairs whp. Notice that these
pairs are random for $b$. By Corollary~\ref{cor_rand_pair} one of
these pairs is deadlock with probability at most $\ln^2{n}
\frac{1}{n^{1.02}} = O(\frac{1}{n})$.

Now consider the case when $0$-clusters are not reachable from
$\{p,q\}$. This means that there is a complete subautomaton
$\mathrsfs{B}$ of $\mathrsfs{A}$ reachable from $\{p,q\}$. By
Lemma~\ref{lem_reach_from_F0} the size of $\mathrsfs{B}$ is at least
$n/4e^2$. Clearly $\mathrsfs{B}$ is a random complete automaton of
size $n/4e^2$ whence by Theorem~1 from~\cite{RandSynch} it is
synchronizable whp whence such a pair $\{p,q\}$ exists wlp as
required.
\end{proof}

%Let us prove the lower bound $1-\Theta(\frac{1}{n})$. %Unlike the
%case of complete automata if $\mathrsfs{A}$ does not have deadlock
%pairs it is not necessary weakly connected.
Notice that $\mathrsfs{A}$ is complete with probability
$(\frac{n}{n+1})^{2n} \geq 0.5e^2$ for $n$ big enough. Now the lower
bound $1-\Theta(\frac{1}{n})$ follows from
\cite[Theorem~1]{RandSynch}.
\end{proof}

Since each strongly connected subautomaton of a random $n$-state
automaton is also random, and by Lemma~\ref{lem_reach_from_F0} its
size is at least $n/4e^2$ whp, we get the main result as a
straightforward consequence of Criterion~\ref{e_crit} and
Theorem~\ref{th_main}.
\begin{theorem}
\label{th_main}The probability of being synchronizable for
$2$-letter strongly connected partial random automaton with $n$
states is $1-O(\frac{1}{n})$.
\end{theorem}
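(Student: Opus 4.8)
The plan is to obtain this as a direct consequence of Criterion~\ref{e_crit} and the merging estimate established above (the first Theorem of this section). By Criterion~\ref{e_crit} a strongly connected partial automaton is synchronizing exactly when every pair $\{p,q\}$ of its states admits a word merging it; so it suffices to bound by $O(1/n)$ the probability that a random strongly connected $\mathrsfs{A}$ has a \emph{deadlock} pair. I would first note that the whole apparatus of this section---stable pairs, $F$-cliques, the highest-tree corollaries, and above all Lemma~\ref{lem_deadlock}---was developed in the ambient model ${\Omega'}^{2}_n$ and already yields that an \emph{unconditional} random partial automaton has no deadlock pair whp, i.e.\ fails with probability $O(1/n)$.

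The only genuinely new ingredient is the passage from the unconditional model ${\Omega'}^{2}_n$ to the strongly connected one. Here I would use the observation that a strongly connected subautomaton of a random automaton is again \emph{uniformly random} on its own state set: the transitions leaving a closed set, conditioned to stay inside it, remain independent and uniform, so the induced automaton is distributed exactly as a member of the corresponding model on its $m$ states. By Lemma~\ref{lem_reach_from_F0} every subautomaton of a random $\mathrsfs{A}$ has at least $n/4e^{2}$ states whp, so the relevant strongly connected core has linear size $m=\Theta(n)$ and every lemma of the section applies to it verbatim, giving merging-failure probability $O(1/m)=O(1/n)$.

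Since the claimed bound is only the upper estimate (failure $O(1/n)$), the cleanest route is to realise the strongly connected model as ${\Omega'}^{2}_n$ conditioned on strong connectivity and to write
\[
\Pr[\,\exists\ \text{deadlock pair}\mid \mathrm{SC}\,]\;\le\;\frac{\Pr[\,\exists\ \text{deadlock pair}\,]}{\Pr[\mathrm{SC}]},
\]
whose numerator is $O(1/n)$ by the first Theorem. It then remains to check $\Pr[\mathrm{SC}]=\Omega(1)$. For this I would combine the fact that $\mathrsfs{A}$ is complete with probability bounded below by a positive constant (as computed at the close of the previous proof) with Lemma~\ref{lem_max_und_trans}, by which a non-complete $\mathrsfs{A}$ still differs from a complete map in only $O(\ln n)$ transitions whp; since a random complete $2$-letter automaton is strongly connected with probability bounded away from $0$, conditioning on $\mathrm{SC}$ inflates the $O(1/n)$ estimate by at most a constant factor.

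The main obstacle is precisely this conditioning step: one must verify both that restricting to strongly connected automata leaves the internal transitions uniform and independent---which is what licenses reusing the entire section---and that it shrinks the ambient probability by at most a constant factor. Once the randomness of the strongly connected core and $\Pr[\mathrm{SC}]=\Omega(1)$ are in place, the conclusion is immediate: whp there is no deadlock pair, every pair is therefore mergeable, and Criterion~\ref{e_crit} upgrades this to synchronization with probability $1-O(1/n)$.
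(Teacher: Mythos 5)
There is a genuine gap, and it sits exactly where you yourself locate the crux. You make the conditioning inequality $\Pr[\exists\,\text{deadlock}\mid\mathrm{SC}]\le\Pr[\exists\,\text{deadlock}]/\Pr[\mathrm{SC}]$ the backbone of the proof and support it with the claim that ``a random complete $2$-letter automaton is strongly connected with probability bounded away from $0$''. That claim is false. In either model a fixed state receives no incoming transition under either letter with probability $(1-\frac{1}{n+1})^{2n}\to e^{-2}$, so with high probability a constant fraction of the states have in-degree zero, are unreachable from everywhere else, and the probability of strong connectivity is exponentially small in $n$. The quotient $\Pr[\exists\,\text{deadlock}]/\Pr[\mathrm{SC}]$ is therefore vacuous as an upper bound. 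The companion assertion that conditioning on strong connectivity ``leaves the internal transitions uniform and independent'' is also wrong: strong connectivity forces every state to have positive in-degree, which is already a global dependency among all transitions. Lemma~\ref{lem_max_und_trans} does not help here, since the obstruction is not the undefined transitions but the in-degree-zero states.

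The route the paper actually takes is the one you sketch in your second paragraph and then abandon. One does not condition the whole automaton on strong connectivity; one looks at a strongly connected subautomaton (a closed subset) of an unconditioned member of ${\Omega'}^{2}_n$. Conditioned on a subset $S$ being closed under both letters, each transition out of $S$ is uniform on $S\cup\{0\}$, so the induced subautomaton is again a random partial automaton on $|S|$ states; Lemma~\ref{lem_reach_from_F0} gives $|S|\ge n/4e^{2}$ whp; the merging theorem (the first Theorem~\ref{th_main}) applied at size $m=\Theta(n)$ shows every pair of $S$ is mergeable with probability $1-O(1/m)=1-O(1/n)$; and Criterion~\ref{e_crit} converts this into synchronization. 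If you keep that paragraph and drop the conditioning detour, your argument coincides with the paper's.
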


The following theorem is an analogue
of~\cite[Theorem~1]{On2Problems}.
\begin{theorem}
\label{th_opt_alg} There is a deterministic algorithm that verifies
Criterion~\ref{e_crit} for a given $k$-letter partial strongly
connected automaton. The proposed algorithm works in linear expected
time in $n$ with respect to ${\Omega'}^{k}_n$. Moreover, for this
problem the proposed algorithm is optimal by expected time up to a
constant factor.
\end{theorem}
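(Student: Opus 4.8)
The plan is to combine a fast, one-sided certificate procedure with the exact quadratic algorithm \emph{IsSynch} as a fallback, in the spirit of~\cite[Theorem~1]{On2Problems}. On input $\mathrsfs{A}$ I would first run a cheap phase that tries to produce a positive certificate that every pair can be merged, and only if this phase fails to conclude would I invoke \emph{IsSynch}, which is always correct. Since \emph{IsSynch} does not use strong connectivity for its correctness -- it merely checks the merging criterion pairwise through reverse BFS on $\mathrsfs{A}^2$ -- the composite algorithm is deterministic and correct on every instance; the entire content of the theorem is therefore in the expected-time and optimality claims.

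For the fast phase I would reuse the stable-pair machinery assembled above. Concretely, I would locate the highest tree of each underlying digraph and, via Theorem~\ref{th_1stable} and Theorem~\ref{th_many_stable_ext}, extract a large supply of stable pairs that are random for the opposite letter; by Corollary~\ref{cor_big_syn_clusters} all large clusters of each $UG_0(x)$ collapse into a single synchronizing class, and Lemmas~\ref{lem_H_is_reachable} and~\ref{lem_stable_in_ac} let me stitch the two letters' classes together and absorb the residual states of $T_a\cup T_b$. When this succeeds it exhibits a spanning system of stable pairs, which certifies synchronization; crucially, by Theorem~\ref{th_many_stable_ext} it inspects only $O(n^{0.7})$ transitions and runs in $O(n)$ time. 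I would declare the phase to have failed the moment any of the structural events it relies on does not materialize, and hand control to \emph{IsSynch}.

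The delicate part is the expected-time bookkeeping. The naive bound $E[T]=O(n)+\Pr[\text{fallback}]\cdot O(n^2)$ is useless, because several of the structural guarantees (for instance Corollary~\ref{cor_htree}) hold only with probability $1-O(1/\sqrt n)$, and $O(1/\sqrt n)\cdot n^2$ is superlinear. The resolution is that an imperfect random structure does not cost the full quadratic: the amount of explicit reverse-BFS the algorithm must perform is governed by how many transitions fall into the $0$-clusters, i.e. by $z_a$ and $z_b$. I would therefore stratify the analysis over the values of $z_x$ and bound the work on each stratum using the sharp tail estimate of Lemma~\ref{lem_0cluster}; configurations with $n-z_x$ small are the expensive ones but are correspondingly rare, so that $\sum_k \Pr[z_x=k]\cdot(\text{cost at }z_x=k)$ sums to $O(n)$. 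Only genuinely non-synchronizable instances force a full \emph{IsSynch} run, and by Theorem~\ref{th_main} together with Lemma~\ref{lem_deadlock} these occur with probability $O(1/n)$, contributing $O(1/n)\cdot O(n^2)=O(n)$. Reconciling the $1/\sqrt n$-type structural events with an overall linear expectation through Lemma~\ref{lem_0cluster} is where I expect the real difficulty to lie.

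Finally, for optimality I would prove a matching $\Omega(n)$ lower bound on the expected time of any correct algorithm by an adversary/indistinguishability argument over ${\Omega'}^{k}_n$. The idea is that an algorithm reading $o(n)$ transitions in expectation leaves, with probability bounded below by a constant, a linear-sized set of states none of whose outgoing transitions it has inspected; conditioned on the transitions already read, the hidden ones remain uniform and independent, so with constant probability one can reveal them so as to create or destroy a deadlock pair without altering the algorithm's view, contradicting correctness. Hence no correct algorithm can read $o(n)$ transitions in expectation, and since even reading the input a constant number of times already costs $\Theta(n)$, the proposed algorithm is optimal up to a constant factor.
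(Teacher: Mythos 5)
Your overall architecture (cheap certificate phase, fall back to \emph{IsSynch}, charge the fallback against its probability) is exactly the architecture of \cite[Theorem~1]{On2Problems} that the paper's own one-line proof defers to, and you correctly isolate the one genuinely delicate point: several structural guarantees, notably Corollary~\ref{cor_htree}, hold only with probability $1-O(1/\sqrt n)$, so a single quadratic fallback would give expected time $\Omega(n^{1.5})$. The problem is that your proposed resolution does not address this. You suggest stratifying over $z_a,z_b$ via Lemma~\ref{lem_0cluster}, but the $\Theta(1/\sqrt n)$ failure probability of the highest-tree event comes from Theorem~\ref{th_high_tree} about random \emph{complete} maps and is present already on the stratum $z_x=0$; it has nothing to do with how large the $0$-clusters are. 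So after conditioning on $z_x$ being small (which happens with probability close to $1$), the certificate phase still aborts with probability $\Theta(1/\sqrt n)$, and if the only fallback is the full \emph{IsSynch} the expected time is superlinear. Your later assertion that ``only genuinely non-synchronizable instances force a full \emph{IsSynch} run'' contradicts the algorithm you describe, since the fast phase also aborts on synchronizable instances whose underlying digraphs merely lack the required tree structure. Closing this gap requires the multi-stage design of \cite{On2Problems}, in which the failure of the cheap phase is handled by intermediate procedures of sub-quadratic cost (or by re-running the construction with different parameters) so that the probability of ever reaching the quadratic stage is driven down to $O(1/n)$; the paper proves nothing here beyond asserting that the partial-automaton routines of Section~\ref{sec_partial} (handling $0$-clusters, undefined transitions, $T_a\cup T_b$) slot into that design in linear time.

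Two smaller points. First, your certificate phase must, when it succeeds, constitute an actual deterministic proof that \emph{every} pair merges -- ``stitching the classes together and absorbing $T_a\cup T_b$'' in $O(n)$ time is precisely the content of the additional routines the paper waves at, and you should say explicitly what is checked (e.g.\ that the at most $n^{0.46}$ residual states and the at most $\ln n$ states with undefined transitions each reach the big synchronizing class). Second, your adversary argument for the $\Omega(n)$ lower bound is reasonable in outline, but note the mismatch you inherit from the theorem statement itself: the input is promised to be strongly connected while the expectation is taken over all of ${\Omega'}^{k}_n$, so the adversary's hidden completion must be argued consistent with the promise (or the lower bound stated for the unconditioned distribution, as in \cite{On2Problems}).
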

\begin{proof}
The only difference with the algorithm presented
in~\cite[Theorem~1]{On2Problems} for complete automata is that here
we are based on Theorem~\ref{th_main}. One can easily verify that
all additional routines for partial automata can be also done in
linear time.
\end{proof}

Finalizing this section, let us remark that all the results can be
trivially adapted to any fixed non-singleton alphabet.

\section{Reachable Partial Automata}
\label{sec_general}

In opposite to the complete automata case, the general case of
partial automata can not be polynomially reduced to the strongly
connected case. Namely, in this section we prove that the
synchronization testing for reachable partial automata is
NP-complete problem.

Clearly for unary alphabet case the problem has no sense because the
unique synchronizing automaton should be complete. Thus $2$-letter
alphabet case is the most interesting. The proof is similar to one
given in~\cite{MyTOCS2013} for the problem of approximating the
minimum length of synchronizing words. We first present a
construction for $4$-letter alphabet and further transform it into
$2$-letter alphabet case using standard encoding techniques.

\begin{theorem}
\label{thm_4} Testing a given reachable partial $4$-letter automaton
for synchronization is NP-complete problem.
\end{theorem}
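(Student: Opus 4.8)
The plan is to prove NP-completeness of testing a reachable partial $4$-letter automaton for synchronization by giving a polynomial reduction from a known NP-complete problem. Membership in NP is the easy direction: a synchronizing word, if it exists, can be taken of polynomial length (or one can nondeterministically guess a sequence of pair-merges), so I only need to focus on NP-hardness. Following the hint that the construction mirrors the one in~\cite{MyTOCS2013} for approximating the minimum length of reset words, I would reduce from a satisfiability-type or set-cover-type problem (most naturally \textsc{SAT} or \textsc{3-SAT}), encoding each variable and clause into gadget states so that a satisfying assignment corresponds exactly to the existence of a word that collapses the entire reachable part of the automaton to a single state.

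\medskip

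\noindent\textbf{Construction.} First I would build, for a given Boolean formula $\varphi$ with variables $x_1,\dots,x_m$ and clauses $C_1,\dots,C_r$, a partial $4$-letter automaton $\mathrsfs{A}_\varphi$ over $\Sigma=\{a,b,c,d\}$. The intuition is to use two of the letters to ``commit'' a truth value to each variable as the reading head advances through a linear backbone of variable-gadget states, and to use the transition function's partiality to forbid certain undefined moves, so that the unique way to bring a collection of designated states together is to follow a path that respects a consistent assignment. Each clause is represented by a state (or small cluster of states) that can be merged into the common sink only if at least one of its literals has been set to true along the chosen path. I would arrange an initial state $q_0$ from which every gadget state is accessible, and a final set $F$ (the intended sink together with co-accessible states) so that the reachability condition of the theorem is satisfied by construction; this is precisely what distinguishes the \emph{reachable} setting from the strongly connected setting of Section~\ref{sec_partial}, and it is what lets the hardness survive.

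\medskip

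\noindent\textbf{Correctness.} The two directions of the reduction are: (i) if $\varphi$ is satisfiable, then reading the word that encodes a satisfying assignment (followed by a fixed ``cleanup'' suffix) merges all states of the reachable part to one, so $\mathrsfs{A}_\varphi$ is synchronizing; and (ii) if $\mathrsfs{A}_\varphi$ is synchronizing, then any reset word must first force a globally consistent choice of truth values on the variable backbone — because inconsistent choices leave two clause-states (or two backbone branches) that can never again be merged, i.e. form a deadlock pair in the sense defined earlier — and the surviving clause-states can all be collapsed only when every clause is satisfied, yielding a satisfying assignment. I would verify that $\mathrsfs{A}_\varphi$ has $O(m+r)$ states and $O(m+r)$ transitions, so the reduction is polynomial, and that the reachability property holds.

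\medskip

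\noindent\textbf{Main obstacle.} The delicate part, and the step I expect to be hardest, is engineering the partiality so that \emph{consistency} of the assignment is enforced. In a complete automaton every letter is always applicable, so one cannot use ``getting stuck'' to rule out contradictory paths; here I must ensure both that an inconsistent sequence of moves provably creates a pair of states with no common merging continuation, and simultaneously that a consistent, satisfying sequence collapses \emph{everything} — including all the auxiliary gadget states — with no residual unmergeable pair. Balancing these two requirements (enough partiality to block bad words, enough defined transitions to allow the good word to synchronize the whole reachable automaton) is the crux of the argument, and I would spend most of the proof carefully checking that the clause gadgets cannot be merged by any ``cheating'' word that bypasses the variable-assignment mechanism.
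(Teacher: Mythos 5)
Your high-level strategy coincides with the paper's: membership in NP via a polynomial-length reset word, and hardness via a reduction from SAT in which the \emph{partiality} of the transition function encodes clause satisfaction (a state ``dies'' exactly when its clause is satisfied by the truth value committed at the current position). However, the proposal stops at the plan and never produces the reduction itself: you explicitly defer ``engineering the partiality'' to future work, and that engineering \emph{is} the content of the theorem. The paper's construction is fully explicit: an Eppstein-style gadget with states $q_{i,j}$ on which the letter $a$ (resp.\ $b$) is undefined precisely when $x_j\in c_i$ (resp.\ $\neg x_j\in c_i$), a triangular component $S^p$ arranged so that exactly one path of length $n$ over $\{a,b\}$ survives (this is what lets the satisfying word collapse everything to a single state rather than to the empty set), and a two-row component $S^g$ with the property that any reachable set containing the pair $\{g_{0,j_{\min}},g_{1,j_{\min}}\}$ at its minimal index is a deadlock. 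That last gadget is what rules out every ``cheating'' word: one shows that any reset word using $c$ or $d$ before position $n+1$ immediately produces such a deadlock pair, so the reset word must begin with a block in $\{a,b\}^n$, which then reads off a satisfying assignment. Without some concrete replacement for these three gadgets and the case analysis on the first occurrence of $c$ or $d$, the ``only if'' direction is not established.

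A secondary point: the obstacle you single out --- enforcing \emph{consistency} of the assignment across clause gadgets --- does not actually arise in this style of construction, because the $j$-th letter of the candidate reset word assigns a value to $x_j$ simultaneously in every clause gadget; consistency is automatic. The genuine difficulties are the two you mention only in passing: (a) making the auxiliary states collapse along with the clause states when $\psi$ is satisfiable, and (b) proving that no word that deviates from the $\{a,b\}^n$-prefix format can synchronize. Redirecting your effort from the consistency issue to an explicit deadlock gadget of the $S^g$ type would turn the outline into a proof.
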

\begin{proof}
If a given partial $n$-state automaton is synchronizing then it has
a reset word of length at most $n^3/2$. This bound easily follows
from Criterion~\ref{e_crit} because the pairs can be merged
subsequently, and if a pair can be merged then it can be merged by a
word of length at most $n^2/2$. Hence there is a polynomial size
certificate for a given instance whence the problem belongs to NP.

Let us take an arbitrary instance $\psi$ of the classical
NP-complete problem SAT (the satisfiability problem for a system of
clauses, that is, formulae in conjunctive normal form) with $n$
variables $x_1, x_2, \dots ,x_n$ and $m$ clauses $c_1, c_2, \dots
,c_m$. For convenience we may assume that the number of clauses $m$
coincides with $n+1$. Otherwise we can either add $m-(n+1)$ fake
variables if $m<n+1$ or $n+1-m$ clauses $(x_1 \cup \neg x_1)$ if
$m>n+1$. We shall construct a reachable partial automaton
$\mathrsfs{A}(\psi)$ with $4$ input letters and polynomial in $m,n$
number of states such that $\mathrsfs{A}$ is synchronizing if and
only if $\psi$ is satisfiable.

Now we describe the construction of the automaton
$\mathrsfs{A}(\psi)=\langle Q,\Sigma,\delta\rangle$ where
$\Sigma=\{a,b,c,d\}$. The state set $Q$ of $\mathrsfs{A}(\psi)$ is
the disjoint union of the three following sets:
\begin{align*}
S^q&= \{q_{i,j} \mid 1 \leq i \leq m,\, 1 \leq j \leq n+1\},\\
S^p&= \{p_{i,j} \mid 1 \leq i \leq m,\, 1 \leq j \leq i \},\\
S^g&= \{g_{i,j} \mid 0 \leq i \leq 1,\, 1 \leq j \leq n+1\}.
\end{align*}
The size of $Q$ is equal to $m(n+1)+m(m+1)/2+2(n+1)$, and hence is a
polynomial in $m,n$.

Now the transition function $\delta$ is defined as follows:
$$\delta(q,a) =
\begin{cases}
    p_{i,j+1} & \text{if } q = p_{i,j} \text{ and } j \leq n;\\
    g_{i,j+1} & \text{if } q = g_{i,j} \text{ and } j \leq n;\\
    q_{i,j+1} & \text{if } q = q_{i,j} \text{ and } x_j \not\in c_i.\\
\end{cases}$$
$$\delta(q,b) =
\begin{cases}
    p_{i,j+1} & \text{if } q = p_{i,j} \text{ and } i < j \leq n;\\
    p_{i+1,i+1} & \text{if } q = p_{i,i} \text{ and } i \leq n;\\
    g_{i,j+1} & \text{if } q = g_{i,j} \text{ and } j \leq n;\\
    q_{i,j+1} & \text{if } q = q_{i,j} \text{ and } \neg x_j \not\in c_i.\\
\end{cases}$$
$$\delta(q,c) =
\begin{cases}
    g_{0,4} & \text{if } q \in S^p;\\
    g_{1,4} & \text{if } q \in S^q;\\
    g_{i,j+1} & \text{if } q = g_{i,j} \text{ and } j \leq n.\\
\end{cases}$$
$$\delta(q,d) =
\begin{cases}
    q_{i,1} & \text{if } q = p_{i,n+1};\\
    g_{1-i,1} & \text{if } q = g_{i,1};\\
    g_{i,1} & \text{if } q = g_{i,j} \text{ and } j > 1.\\
\end{cases}$$

Let us informally comment on the essence of our construction. It is
based on Eppstein's gadget $\mathrsfs{E}(\psi)$ from~\cite{Ep90}.
The gadget consists of the state set $\{q_{i,j} \mid 1 \leq i,j \leq
n+1\}$, on which the letters $a$ and $b$ act as described above, and
controls the following. If the literal $x_j$ (respectively $\neg
x_j$) occurs in the clause $c_i$, then the letter $a$ (respectively
$b$) is undefined on the state $q_{i,j}$. This encodes the situation
when one can satisfy the clause $c_i$ by choosing the value $1$
(respectively $0$) for the variable $x_j$. Otherwise, the letter $a$
(respectively $b$) increases the second index of the state. This
means that one cannot make $c_i$ be true by letting $x_j=1$
(respectively $x_j=0$), and the next variable has to be inspected.

For the reader's convenience, we illustrate the construction of
$\mathrsfs{A}(\psi)$ on the following example.
Figure~\ref{A2_example} shows two automata of the form
$\mathrsfs{A}(\psi)$ built for the SAT instances
\begin{align*}
\psi_1&=\{x_1 \vee x_2 \vee x_3,\, \neg x_1 \vee x_2,\, \neg x_2
\vee x_3,\,\neg x_2 \vee \neg x_3\},\\
\psi_2&=\{x_1 \vee x_2,\,\neg x_1 \vee x_2,\, \neg x_2 \vee
x_3,\,\neg x_2 \vee \neg x_3\}.
\end{align*}

\unitlength=0.85mm

\begin{figure}[ht]
\scalebox{0.7}[0.7]{
\begin{picture}(120,170)(0,-175)

\drawrect[dash={5.0 3.0}{0.0}](17.72,-17.705,120,-78.295)
\drawrect[dash={5.0 3.0}{0.0}](17.72,-83.295,100,-168.295)
\node[Nframe=n,Nw=10.32,Nh=9.0,Nmr=0.0](Sq)(90,-165.0){$S^q$}

%\drawrect[dash={5.0 3.0}{0.0}](105.185,-140,134,-187.445)
\node[Nframe=n,Nw=10.32,Nh=9.0,Nmr=0.0](Sp)(115,-72.0){$S^p$}

\node(p11)(32.24,-24.11){$p_{1,1}$}
\node(p12)(32.30,-40.08){$p_{1,2}$}
\node(p13)(32.30,-56.05){$p_{1,3}$}
\node(p14)(32.32,-72.02){$p_{1,4}$}

\node(p22)(52.29,-40.08){$p_{2,2}$}
\node(p23)(52.29,-56.05){$p_{2,3}$}
\node(p24)(52.29,-72.02){$p_{2,4}$}

\node(p34)(72.26,-72.02){$p_{3,4}$}
\node(p33)(72.26,-56.05){$p_{3,3}$}

\node(p44)(92.07,-71.94){$p_{4,4}$}

\drawedge(p11,p12){$a$} \drawedge(p11,p22){$b$}
\drawedge(p12,p13){$a,b$} \drawedge(p13,p14){$a,b$}

\drawedge(p22,p33){$b$} \drawedge(p22,p23){$a$}
\drawedge(p23,p24){$a,b$}

\drawedge(p33,p44){$b$} \drawedge(p33,p34){$a$}

\node(q32)(71.98,-111.80){$q_{3,2}$}
\node(q22)(52.01,-111.80){$q_{2,2}$}
\node(q23)(52.01,-131.74){$q_{2,3}$}
\node(q24)(52.01,-151.68){$q_{2,4}$}
\node(q34)(71.98,-151.68){$q_{3,4}$}
\node(q21)(51.73,-91.68){$q_{2,1}$}
\node(q31)(71.92,-91.86){$q_{3,1}$}
\node(q33)(71.98,-131.74){$q_{3,3}$}
\node(q42)(91.95,-111.80){$q_{4,2}$}
\node(q44)(92.09,-151.68){$q_{4,4}$}
\node(q41)(91.94,-91.91){$q_{4,1}$}
\node(q43)(92.09,-131.74){$q_{4,3}$}
\node(q12)(32.02,-111.80){$q_{1,2}$}
\node(q14)(32.04,-151.68){$q_{1,4}$}
\node(q11)(31.96,-91.86){$q_{1,1}$}
\node(q13)(32.02,-131.74){$q_{1,3}$}

\node(g01)(120.00,-91.86){$g_{0,1}$}
\node(g02)(120.00,-112.00){$g_{0,2}$}
\node(g03)(120.00,-131.74){$g_{0,3}$}
\node(g04)(120.00,-152.54){$g_{0,4}$}

%\drawedge[curvedepth=2.0,ELside=r,ELpos=80](g01,g02){$a,b$}
\drawedge(g01,g02){$a,b,c$} \drawedge(g02,g03){$a,b,c$}
\drawedge(g03,g04){$a,b,c$}

\node(g11)(145.00,-91.86){$g_{1,1}$}
\node(g12)(145.00,-112.00){$g_{1,2}$}
\node(g13)(145.00,-131.74){$g_{1,3}$}
\node(g14)(145.00,-152.54){$g_{1,4}$}

\drawedge(g11,g12){$a,b,c$}
%\drawedge[curvedepth=-2.0,ELside=l,ELpos=80](g11,g12){$a,b$}
\drawedge(g12,g13){$a,b,c$} \drawedge(g13,g14){$a,b,c$}

\drawedge(g01,g11){$d$}
\drawedge[curvedepth=5.0,ELside=r,ELpos=40](g02,g01){$d$}
\drawedge[curvedepth=8.0,ELside=r,ELpos=30](g03,g01){$d$}
\drawedge[curvedepth=11.0,ELside=r,ELpos=30](g04,g01){$d$}

\drawedge(g11,g01){}
\drawedge[curvedepth=5.0,ELside=r,ELpos=40](g12,g11){$d$}
\drawedge[curvedepth=9.0,ELside=r,ELpos=30](g13,g11){$d$}
\drawedge[curvedepth=13.0,ELside=r,ELpos=30](g14,g11){$d$}

\drawedge(p14,q11){$d$} \drawedge(p24,q21){$d$}
\drawedge(p34,q31){$d$} \drawedge(p44,q41){$d$}

\drawedge(q11,q12){$b$} \drawedge(q21,q22){$a$}
\drawedge(q31,q32){$a,b$} \drawedge(q41,q42){$a,b$}

\drawedge(q12,q13){$b$} \drawedge(q22,q23){$b$}
\drawedge(q32,q33){$a$} \drawedge(q42,q43){$a$}

\drawedge[dash={3.0
3.0}{0.0},ELside=r,ELdist=2.0,curvedepth=-8.6](q13,q14) {$a$ in
$\mathrsfs{A}(\psi_2)$} \drawedge[curvedepth=8.0](q13,q14){$b$}

\drawedge(q23,q24){$a,b$} \drawedge(q33,q34){$b$}
\drawedge(q43,q44){$a$}

\drawedge[curvedepth=-14.0,ELpos=25](Sp,g04){$c$}
\drawedge[ELside=r,ELpos=50](Sq,g14){$c$}

\end{picture}
}
\caption{The automata $\mathrsfs{A}(\psi_1)$ and
$\mathrsfs{A}(\psi_2)$} \label{A2_example}

\end{figure}

The two instances differ only in the first clause: in $\psi_1$ it
contains the variable $x_3$ while in $\psi_2$ it does not.
Correspondingly, the automata $\mathrsfs{A}(\psi_1)$ and
$\mathrsfs{A}(\psi_2)$ differ only by the outgoing arrow labeled $a$
at the state $q_{1,3}$: in $\mathrsfs{A}(\psi_1)$ there is no such
arrow while in $\mathrsfs{A}(\psi_2)$ it leads to the state
$q_{1,4}$ and is shown by the dashed line.

Observe that $\psi_1$ is satisfiable for the truth assignment
$x_1=0$, $x_2=0$, $x_3=1$ while $\psi_2$ is not satisfiable. It is
not hard to check that the word $bbac$ synchronizes
$\mathrsfs{A}(\psi_1)$ to the state $g_{0,4}$ and
$\mathrsfs{A}(\psi_2)$ is not synchronizing.

We may assume that $\psi$ is \emph{reduced}, i.e. for each $j>1$ at
most one of the literals $x_j, \neg x_j$ may belong to some clause
$c_i$. This would imply that $\mathrsfs{A}(\psi)$ is reachable.

First consider the case when $\psi$ is satisfiable. Then there
exists a truth assignment
$$\tau:\{x_1,\dots,x_n\}\to\{0,1\}$$ such that
$c_i(\tau(x_1),\dots,\tau(x_n))=1$ for every clause $c_i$ of $\psi$.
We construct a word $v=v(\tau)$ of length $n$ as follows:
\begin{equation}
\label{encoding} v[j]=\begin{cases}
a & \text{ if } \tau(x_j)=1;\\
b & \text{ if } \tau(x_j)=0.
\end{cases}
\end{equation}
We aim to prove that the word $w=vc$ is a \sw\ for
$\mathrsfs{A}(\psi)$, that is, $|Q.w|=1$. Clearly, $q_{i,j}.v =
q_{i,j+n}.x$ is undefined for $j>1$ because $x \in \{a,b\}$.
Analogously, $S^g.w = \emptyset$. Since
$$c_i(\tau(x_1),\dots,\tau(x_n))=1$$ for every clause $c_i$, there
is an index $j$ such that either $x_j \in c_i$ and $\tau(x_j)=1$ or
$\neg x_j \in c_i$ and $\tau(x_j)=0$. This readily implies (see the
comment following the definition of the transition function of
$\mathrsfs{A}(\psi)$) that $q_{i,1}.v$ is undefined for all $1\le
i\le m$. On the other hand, $S^p.w = p_{1,1}.vc = g_{0,n+1}$ because
there is exactly one valid path from $S^p$ by word of length $n$
that does not involve $c$. Thus we have shown that $w$ is reset for
$\mathrsfs{A}(\psi)$.

Now we consider the case when $\psi$ is not satisfiable. Arguing by
contradiction, let $w$ be the shortest \sw. The following remark
easily follows from the definition of $\delta$ on $S^g$ and $S^q$.
\begin{remark}
\label{rem_bad_g} If a set $T \subseteq S^g \cup S^q$ contains a
pair $\{g_{0,j_{min}},g_{1,j_{min}}\}$ where $j_{min}$ is the
minimum row index of states from $T$; then $T$ can not be merged.
\end{remark}

Suppose $i \leq n$ be the first position of $c$ or $d$ in $w$. If
$w[i]=c$ then $Q.w[1..i] \subseteq S^g$ and $Q.w[1..i]$ satisfies
Remark~\ref{rem_bad_g} because $Q.w[1..i-1]$ contains states from
both $S^p$ and $S^q$ and $S^p.c = \{g_{0,n+1}\}, S^q.c =
\{g_{1,n+1}\}$. If $w[i]=d$ then $\{g_{0,1},g_{1,1}\} \subseteq
Q.w[1..i]$ and $S^p \cap Q.w[1..i] = \emptyset$ whence $Q.w[1..i]$
satisfies Remark~\ref{rem_bad_g} again.

Thus $w = u v$ where $u \in \{a,b\}^n$. Define a truth assignment
$\tau:\{x_1,\dots,x_n\}\to\{0,1\}$ as follows:
$$\tau(x_j)=\begin{cases}
1 & \text{ if } u[j]=a;\\
0 & \text{ if } u[j]=b.
\end{cases}$$
Since $\psi$ is not satisfiable, we have
$c_i(\tau(x_1),\dots,\tau(x_n))=0$ for some clause $c_i$, $1\le i\le
m$. According to our definition of the transition function of
$\mathrsfs{A}(\psi)$, this means that $q_{i,j}.u[j]=q_{i,j+1}$ for
all $j=1,\dots,n$. Hence $q_{i,n+1}= q_{i,1}.w[1..n]$.

If $w[n+1] \in \{a,b\}$ then $Q.w = \emptyset$. If $w[n+1] = c$ then
$Q.w[1..n+1] = \{g_{0,1},g_{1,1}\}$ and by Remark~\ref{rem_bad_g} we
get a contradiction. Finally if $w[n+1]=d$ then $\{g_{0,1},g_{1,1}\}
\subseteq Q.w[1..i]$ and $S^p \cap Q.w[1..i] = \emptyset$ whence
$Q.w[1..i]$ again satisfies Remark~\ref{rem_bad_g}. Thus we get a
contradiction whence $\mathrsfs{A}(\psi)$ is not synchronizing and
we are done. \qed
\end{proof}

Now we show that Theorem~\ref{thm_4} can be extended to automata
with only 2 input letters.
%The proof of the following result relies on the usual technique of encoding letters in states.
\begin{corollary}
\label{cor_2} The problem of deciding whether a given reachable
partial $2$-letter automaton is synchronizing is NP-complete.
\end{corollary}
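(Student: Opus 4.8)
The plan is to reduce the $4$-letter case from Theorem~\ref{thm_4} to the $2$-letter case by a standard alphabet-encoding technique. Since $\Sigma=\{a,b,c,d\}$ has four letters, I will encode each letter as a distinct binary word of fixed length $2$ over a new alphabet $\{0,1\}$, say $a\mapsto 00$, $b\mapsto 01$, $c\mapsto 10$, $d\mapsto 11$. The idea is to simulate each original transition by a length-$2$ path through a freshly inserted intermediate state, so that reading the two-letter code of an original letter moves a token exactly as that letter did in $\mathrsfs{A}(\psi)$.

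Concretely, for each state $q$ of $\mathrsfs{A}(\psi)$ and each original letter, I would introduce one intermediate state reached after the first bit of the code, and route the second bit to the correct image under the original transition; whenever the original transition was undefined, the corresponding binary path is left undefined as well. Because the code is prefix-free and of uniform length, no spurious synchronizing behaviour is introduced: a word $w$ over $\{0,1\}$ can reset the new automaton only if its length is even and it decodes to a word $w'$ over $\{a,b,c,d\}$ that resets $\mathrsfs{A}(\psi)$. I would verify both directions of this correspondence and check that reachability is preserved, since the new intermediate states are accessible from the original states and co-accessible to the original sink, so the encoded automaton is again reachable. The size of the new automaton and the encoding are clearly polynomial, and the reduction is computable in polynomial time, so NP-hardness transfers; membership in NP follows exactly as in the proof of Theorem~\ref{thm_4}, since the same $n^3/2$ bound on the length of a reset word applies to any partial automaton.

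The main obstacle is ensuring that the encoding does not accidentally create a reset word that does not correspond to any decodable word over the original alphabet. In particular I must rule out synchronizing words of odd length or words that begin reading a letter-code ``out of phase.'' The standard way to handle this is to keep the intermediate states pairwise distinct and unmergeable except by completing a full letter-code, so that any set of tokens sitting on intermediate states after an odd number of steps cannot be merged unless it first returns to the ``aligned'' states by reading one more bit. I would formalize this by a phase-invariant argument analogous to Remark~\ref{rem_bad_g}: an unaligned configuration always contains an unmergeable pair, forcing every reset word to have even length and hence to decode cleanly. Once this alignment lemma is in place, the equivalence ``$\mathrsfs{A}(\psi)$ synchronizing $\iff$ the $2$-letter automaton synchronizing $\iff$ $\psi$ satisfiable'' follows directly from Theorem~\ref{thm_4}, completing the proof.
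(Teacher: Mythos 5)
Your reduction is a genuinely different one from the paper's. You propose a fixed-length binary block code ($a\mapsto 00$, etc.) with freshly inserted intermediate states, whereas the paper takes $Q'=Q\times\Sigma$ and lets the letter $a$ cycle a ``selector'' in the second component while $b$ fires the currently selected letter of $\mathrsfs{A}(\psi)$ on the first component; reset words then correspond under an explicit bijection $f$ between $\{a,b\}$-words (parsed as blocks $a^{i}b$) and words over the four-letter alphabet. The paper's construction has no notion of phase at all, which is precisely the point where your proposal runs into trouble.

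The gap is in your alignment lemma. Synchronization starts from the \emph{entire} state set $Q'$ of the encoded automaton, and $Q'$ already contains all of the intermediate states alongside the aligned ones. Hence the initial configuration is itself ``unaligned,'' and your claimed invariant --- that an unaligned configuration always contains an unmergeable pair --- would immediately imply that the encoded automaton is never synchronizing, destroying the forward direction of the reduction. If you weaken the claim so that it does not apply to the initial set, you lose the backward direction instead: the tokens that begin on intermediate states parse the input with an offset of one bit, so they simulate a \emph{different} letter sequence than the aligned tokens, and nothing you have said prevents these out-of-phase tokens from being merged (or killed by undefined transitions, leaving a single out-of-phase survivor as a spurious singleton) in a way that has no preimage under your decoding. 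To repair this you would need an explicit re-alignment mechanism (e.g.\ transitions that funnel every intermediate state into the aligned part before any merging can occur, together with a proof that a lone surviving intermediate state cannot be the synchronization target), none of which is supplied; the phase-invariant is asserted by analogy with Remark~\ref{rem_bad_g} but that remark concerns a completely different gadget. The NP-membership half of your argument is fine, as it is inherited verbatim from Theorem~\ref{thm_4}.
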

\begin{proof}
For every partial automaton
$\mathrsfs{A}=(Q,\Sigma=\{a_1,a_2,a_3,a_4\},\delta)$, we construct a
reachable automaton $\mathrsfs{B}=(Q',\{a,b\}, \delta')$ such that
$\mathrsfs{A}$ is synchronizing if and only if $\mathrsfs{B}$ is
synchronizing and $|Q'|$ is a polynomial of $|Q|$. We let $Q' = Q \
\times \Sigma$ and define the function $\delta':Q'\times \{a,b\}\to
Q'$ as follows:
\begin{align*}
\delta'((q,a_i),a)&=(q,a_{\min(i+1,4)}),\\
\delta'((q,a_i),b)&=(\delta(q,a_i),a_1).
\end{align*}
Thus, the action of $a$ on a state $q'\in Q'$ substitutes an
appropriate letter from the alphabet $\Sigma$ of $\mathrsfs{A}$ for
the second component of $q'$ while the action of $b$ imitates the
action of the second component of $q'$ on its first component and
resets the second component to $a_1$. Given a word $w = a^{i_1} b
a^{i_2} b \dots a^{i_k} b \in \Sigma'^{*}$ define a word $f(w) =
a_{\min(i_1,4)}a_{\min(i_2,4)} \dots a_{\min(i_k,4)}$.

Then the word $f(w)$ is easily seen to be a \sw\ for $\mathrsfs{A}$
if and only if $w$ is \sw\ for $\mathrsfs{B}$. The corollary follows
because $f$ is bijective function from $\Sigma'^{*}$ to
$\Sigma^{*}$.\qed
\end{proof}

\newpage
\section*{Appendix}

\textbf{Lemma~\ref{lem_0cluster}.} Given a letter $x$ and an integer
$0 \leq k \leq n$, the probability that $z_x = k$ is at most $O(
\frac{1}{k+1}( \frac{1}{\sqrt{k+1}} + \frac{1}{\sqrt{n-k+1}}))$.
\begin{proof}
    We use the famous formula $N(n+N)^{n-1}$ for the number of
forests with $N$ root vertices and $n$ non-root vertices. Then
\begin{equation}
\label{eq_zx} P(z_x = k) = \frac{{n \choose k} (k+1)^{k-1}
(n-k)^{n-k}}{(n+1)^n}.
\end{equation} Indeed, first we choose $k$ subset from $Q$ in ${n
\choose k}$ ways, next we choose a tree with root $0$ and $k$
non-root vertices in $(k+1)^{k-1}$ ways; the transitions by $x$ for
remaining $n-k$ states can be defined in $(n-k)^{n-k}$ ways. Since
there are $(n+1)^n$ ways to choose $x \in {\Sigma'}_n$,
equation~(\ref{eq_zx}) follows.

Using Stirling's formula we get that $${n \choose k} (n-k)^{n-k}
\leq \frac{n^n}{k^k} \frac{n}{\sqrt{(k+1)(n-k+1)}}.$$ Finally we get
the bound
$$\frac{\sqrt{n}}{\sqrt{(k+1)(n-k+1)}(k+1)} \frac{(1+1/k)^{k}}{(1+1/n)^{n}} \leq c \frac{1}{k+1}( \frac{1}{\sqrt{k+1}} + \frac{1}{\sqrt{n-k+1}}).$$
\end{proof}

\textbf{Corollary~\ref{cor_htree}.} Let $T$ be the highest tree in
$UG_0(x)$. Then with probability $1-O(1/\sqrt{n})$ all other trees
in $UG_0(x)$ are strictly lower than $T$ and there are at least
$32\ln{n}$ vertices of levels greater than $\tau(T)$ in $T$.
\begin{proof}
    For a given $k=z_x$ the digraph $UG_0(x)$ is a random digraph of size $n-z_x$.
Hence the probability that this digraph does not satisfy
Theorem~\ref{th_high_tree} can be bounded by
\begin{multline}
2c \sum_{k=0}^{n} \frac{1}{\sqrt{n-k+1}} \frac{1}{k+1}(
\frac{1}{\sqrt{k+1}} + \frac{1}{\sqrt{n-k+1}}) = \\
= 2c \sum_{k=1}^{n+1} (\frac{1}{(n+2-k)k} + \frac{1}{k^{1.5}
\sqrt{n+2-k}}) \leq \\ \leq 2c( \frac{\ln{n}}{n} +
\sum_{k=1}^{n/2}{\frac{1}{k^{1.5} \sqrt{n/2}}} +
\sum_{k=n/2}^{n+1}{\frac{1}{(n/2)^{1.5} \sqrt{n+2-k}}}) \leq \\
\leq 2c( o(\frac{1}{\sqrt{n}}) + \int_{1}^{n/2}{\frac{dk}{k^{1.5}
\sqrt{n/2}}} + \int_{n/2}^{n+1}{\frac{dk}{(n/2)^{1.5}
\sqrt{n+2-k}}}) = \\
 = o(\frac{1}{\sqrt{n}}) + 0.5c
(\frac{1}{\sqrt{k} \sqrt{n/2}} |_{n/2}^{1} + \frac{1}{(n/2)^{1.5}}
\sqrt{n+2-k} |_{n}^{n/2}) = O(\frac{1}{\sqrt{n}}).
\end{multline}
\end{proof}

\textbf{Lemma~\ref{lem_reach_from_F0}.}
The number of states in any
subautomaton of $\mathrsfs{A}$ is at least $n/4e^2$ whp.
\begin{proof}
The probability that there is subset of size less than $n/4e^2$
which is closed under the actions of the letters can be bounded by
$$\sum_{i=1}^{n/4e^2}{n \choose i}(\frac{i}{n})^{2i} \leq \sum_{i=1}^{n/4e^2}(\frac{ei}{n})^{i}(\frac{(i+1)n}{i(n+1)})^{2i} \leq e^2 \sum_{i=1}^{n/4e^2}(\frac{ei}{n})^{i}.$$
Indeed, we first choose an $i$-state subset in ${n \choose i}$ ways
and then the probability that both letters leave one state in this
subset is $(\frac{i+1}{n+1})^{2}$.

For $i \leq n/4e^2$ we get that $(\frac{ei}{n})^{i} \geq
2(\frac{e(i+1)}{n})^{i+1}$. Hence the sum can be bounded by doubled
first element $2e^2(\frac{e}{n})$ and we are done.
\end{proof}

\textbf{Lemma~\ref{lem_stable_in_ac}.} Let $k = z_a$. If $n-k \geq
n^{0.701}$, then there are at least $(n-k)^{0.001}$ pairs from $S$
in $UG_0(a)$ whp.
\begin{proof}
Since $S$ is random for $a$, we have to estimate the probability of
choosing $|S|$ random distinct pairs without repetition from $Q
\times Q$ such that we will choose less than $(n-k)^{0.001}$ pairs
in $UG_0(a)$. Suppose we have already chosen $d < |S|$ pairs, and at
most $n^{0.001}$ of these pairs lie inside $S$. Then the probability
to choose the next pair inside $S$ is at least
$$ \frac{(n-k-2d)^2}{n^2} \geq \frac{(1-\frac{2d}{n-k})^2
(n-k)^2}{n^2} \geq (1-\frac{2n^{0.6}}{n^{0.701}})^2
\frac{(n-k)^2}{n^2}.$$ Since $(1-\frac{2n^{0.6}}{n^{0.701}})^{2
n^{0.001}} = 1-O(\frac{1}{n^{0.1}}) \geq 2$ for sufficiently big
$n$, the factor $p=(1-\frac{2n^{0.6}}{n^{0.701}})^2$ does not impact
on the asymptotic. Hence we can consider the choice of $S$ as the
Bernoulli scheme with $n^{0.6}$ independent experiments, each of
which yields success with probability $p$. Then the number of
successes is given by binomial distribution. Using Chernoff's
inequality $F(n,k,p) \leq e^{-\frac{1}{2p}\frac{(np-k)^2}{n}}$ for
the cumulative distribution function of the binomial distribution we
get the desired bound
\begin{multline}F(|S|,(n-k)^{0.001},\frac{(n-k)^2}{n^2}) \leq
e^{-\frac{(n^{0.6}\frac{(n-k)^2}{n^2} - (n-k)^{0.001})^2}{2
\frac{(n-k)^2}{n^2} n^{0.6}}} \leq \\
\leq e^{-\frac{(n^{0.6}\frac{(n-k)^2}{n^2})}{4}} = e^{-0.002n} =
o(\frac{1}{n}).
\end{multline}

\end{proof}

\textbf{Lemma~\ref{lem_max_und_trans}.} Whp there are at most
$\ln{n}$ states with undefined transition by $x$ for each $x \in
\{a,b\}$.
\begin{proof}
    Suppose there are exactly $r$ states with undefined transition
by $x$. The probability of such event is $$\frac{{n \choose
r}n^{n-r}}{(n+1)^n} \leq \frac{n^n n^{n-r}}{r^r (n-r)^{n-r} (n+1)^n}
\leq (\frac{e}{r})^r = \phi(r)$$ Indeed, there are ${n \choose r}$
ways to choose $r$ states with undefined transition by $x$, and for
each of the $n-r$ remained states there are $n$ ways to define
transition by $x$. Since $\phi(r)>2\phi(r+1)$ for $r>2e$, the
probability of more than $\ln{n}$ undefined transitions (for
$n>e^{2e}$) is bounded by $\phi(\ln{n}) = e^{(1-\ln{\ln{n}})\ln{n}}$
which is $o(\frac{1}{n})$.
\end{proof}

\textbf{Lemma~\ref{lem_rand_pair}.} A random pair $\{p,q\}$ for a
letter $x \in \{a,b\}$ is deadlock with probability at most
$O(\frac{1}{n^{0.51}})$.
\begin{proof}
Without loss of generality suppose $x=a$ and consider the chain of
states $$p,q, p.a,q.a, \dots ,p.a^{r-1},q.a^{r-1}$$ where $r$ is the
maximal integer such that all states in this chain are different and
defined.

Suppose $p' = p.a^{r}$ is defined and already exists in the chain.
If $q'.a^{r+t}$ is not defined for some $t \geq 0$, then the pair
$\{p,q\}$ is not deadlock, because $p.a^{r (t+1)} = p'$ is defined
while $q.a^{r (t+1)}$ does not. Hence $\{p,q\}$ belongs to
$UG_0(a)$. If $n-z_a<n^{0.701}$ the probability of such event is at
most $2(\frac{n-z_a}{n})^2 = O({n^{-0.299*2}}) =
O(\frac{1}{n^{0.51}})$. Otherwise at least one of the states must
belong to $T_{a}$, and by Corollary~\ref{cor_big_syn_clusters} the
probability of such event is at most $2\frac{|T_a|}{n} =
O(\frac{1}{n^{0.51}})$.

It remains to consider the case when both $p.a^{r}$ and $q.a^{r}$
are not defined. The probability of such event is at most
$\frac{1}{(n+1)^2}$ for a given pair and for the whole chain is
bounded by
$$\frac{1}{(n+1)^2}(1 + \frac{(n-2)(n-3)}{(n+1)^2} +
\frac{(n-4)(n-5)}{(n+1)^2} + \dots +
\frac{(n-2r)(n-2r-1)}{(n+1)^2}).$$ Clearly $r < 0.5n$ whence the sum
is bounded by $$\frac{2}{(n+1)^4}\int_{0}^{0.5n}{(n-2r)^2 dr} =
O(\frac{1}{n}).$$
\end{proof}

\end{document}